\definecolor{Gray}{gray}{0.9}
\newtheorem{theorem}{Theorem}
\newtheorem{proposition}{Proposition}
\begin{document}

\title{\huge Distributed Power Control in Interference Channels with QoS Constraints and RF Energy Harvesting: A Game-Theoretic Approach
\thanks{The authors are with the School of Electrical and Information Engineering, the University of Sydney, NSW, 2006, Australia (e-mail:~\{he.chen,~yuanye.ma,~zihuai.lin,~yonghui.li,~branka.vucetic\}@sydney.edu.au). H. Chen and Y. Ma contributed equally to the paper.
}
}
\author{
He~(Henry)~Chen,~
Yuanye Ma,~
Zihuai~Lin,~
Yonghui~Li,~
Branka~Vucetic
}
\maketitle

\begin{abstract}
This paper develops a new distributed power control scheme for a power splitting-based interference channel (IFC) with simultaneous wireless information and power transfer (SWIPT). The considered IFC consists of multiple source-destination pairs. Each destination splits its received signal into two parts for information decoding and energy harvesting (EH), respectively. Each pair adjusts its transmit power and power splitting ratio to meet both the signal-to-interference-plus-noise ratio (SINR) and EH constraints at its corresponding destination. To characterize rational behaviors of source-destination pairs, we formulate a non-cooperative game for the considered system, where each pair is modeled as a strategic player who aims to minimize its own transmit power under both SINR and EH constraints at the destination. We derive a \emph{sufficient and necessary} condition for the existence and uniqueness of the Nash equilibrium (NE) of the formulated game. The best response strategy of each player is derived and then the NE can be achieved iteratively. Numerical results show that the proposed game-theoretic approach can achieve a near-optimal performance under various SINR and EH constraints.
\end{abstract}
\begin{IEEEkeywords}
Simultaneous wireless information and power transfer, interference channel, distributed power control, game theory, Nash equilibrium.
\end{IEEEkeywords}

\IEEEpeerreviewmaketitle

\section{Introduction}

Wireless energy transfer (WET) technology has drawn significant interest over the past decade because of its potentially wide applications in our daily life. Recently, the recent emerging WET techniques have enabled wireless devices to harvest energy from ambient/dedicated radio frequency (RF) signals~\cite{Lu_arxiv_2014_wireless}. On the other side, it is well known that RF signals are also used to carry information in wireless communications. As a result, simultaneous wireless information and power transfer (SWIPT) \cite{Krikidis_2014_simultaneous} has recently been proposed to realize the dual utilization of RF signals for joint information and energy transfer at the same time. Such dual utilization of RF signals in SWIPT leads to different system design in various setups and applications. For example, in an interference channel (IFC) with SWIPT, the cross-link interference is still harmful to the information decoding (ID) at the receiver side, but it becomes beneficial when we pay more attention to the energy harvesting (EH) aspect.

There have been several papers in the open literature that focus on the design of SWIPT in IFCs \cite{shen2013wireless,park2013joint,parkKuserjoint,timotheou2013beamforming,shi2014joint}. Specifically, \cite{shen2013wireless} considered SWIPT in a multiple-input single-output (MISO) IFC, where the weighted sum-rate was maximized subject to individual EH constraints and transmission power constraints. In \cite{park2013joint}, all possible transmission strategies with different combinations of information decoding and energy harvesting at the receiver side were investigated and compared in a two-user multiple-input multiple-output (MIMO) IFC, which was subsequently extended to the general $K$-user case in \cite{parkKuserjoint}. \cite{timotheou2013beamforming} studied the joint beamforming and power splitting problem in a MISO IFC, where the total transmit power of all transmitters was minimized under both rate and EH constraints by employing the semidefinite programming (SDP) method. Different from \cite{timotheou2013beamforming}, a second-order cone programming (SOCP) relaxation-based approach was developed in \cite{shi2014joint} as an alternative solution to resolve the same total power minimization problem in a decentralized manner. In all aforementioned papers that designed SWIPT schemes in IFCs, it is assumed that all source-destination pairs cooperate to achieve the optimal network-wide performance (e.g., maximizing the sum-rate/minimizing the total transmit power of all pairs). However, in many practical scenarios, source-destination pairs may be rational such that they only care about their own performance instead of the overall one (see \cite{larsson2009game} and references therein). To the best of our knowledge, there has been no work that designs SWIPT for the IFC with self-interested source-destination pairs in the open literature. This gap actually motivates our paper. It is worth pointing out that game theory has actually been applied to investigate different setups
of communication networks involving RF energy harvesting in \cite{Chen_TWC_2015_Dis,Ma_PIMRC_2015_Dis,Chen_ICASSP_2015_A,Ma_Tcom_2015_Dis}. But none of them studied the concerned scenario of IFCs with SWIPT and QoS constraints.

In this paper, we develop a game-theoretic framework for the distributed power control in a power-splitting based IFC with SWIPT. The main \emph{contributions} of this paper are summarized as follows: \textbf{(1)} we formulate a non-cooperative game for the considered IFC-SWIPT system, where each source-destination pair is modeled as a strategic player who aims to minimize the source transmit power while satisfying both signal-to-interference-plus-noise ratio (SINR) and EH constraints at the destination; \textbf{(2)} we derive a sufficient and necessary condition to guarantee the existence and uniqueness for the Nash equilibrium (NE) of the formulated game; \textbf{(3)} the best response strategy for each pair is derived and then the NE is achieved in a distributed manner; \textbf{(4)} numerical results validate the theoretical analysis and show that the proposed game-theoretic approach can achieve a near-optimal network-wide performance.

\section{System Model and Game Formulation}
In this section, we first describe the system model and then formulate a non-cooperative game for the considered network.

\subsection{System Model}

In this paper we consider {a single-input single-output (SISO)\footnote{It would be more interesting to study a general multiple-input multiple-output (MIMO) IFC with SWIPT. However, this would need to develop a totally new game-theoretic framework for the joint design of transmit power, transmit and receive beamforming vectors and power splitting ratio at each source-destination pair, which may constitute another full paper and thus have been left as our future~work.}} IFC consisting of $N$ source-destination pairs. We use $\mathcal{N} =\{1, \cdots, n, \cdots, N \}$ to denote the index set of source-destination pairs, in which the $n$th pair consists of the $n$th source and the $n$th destination. All pairs are assumed to share the same frequency band and thus they interfere with each other. We assume that all sources and destinations are equipped with one antenna and operate in a half-duplex mode. All destinations are equipped with a power splitting \cite{Krikidis_2014_simultaneous} devices such that they are able to decode the information as well as harvest energy from the received signal at the same time. Besides, we consider that the links between all nodes experience the slow and frequency-flat fading.

Let $h_{nn}$ and $h_{mn}$ denote the channel gain from the $n$th source to the $n$th destination and that from the $m$th source to the $n$th destination, respectively. At the $n$th destination, the received signal before power splitting can be expressed as
\begin{equation}
\begin{split}
y_n = h_{nn} \sqrt{p_n} x_n +  \sum_{m\in \mathcal{N}/\{n\}} h_{mn} \sqrt{ p_m}x_m + z_n,
\end{split}
\end{equation}
where $p_n$ and $p_m$ are transmit powers of the $n$th source and $m$th source, respectively. $x_n$ and $x_m$ denote the unit-energy symbols transmitted by the $n$th source and $m$th source. $z_n \sim \mathcal{CN} (0, \delta_n^2)$ is the additive noise introduced by the receiver antenna at the $n$th destination.

The $n$th destination splits the received signal into two streams with a power splitting ratio $\alpha_n$. The fraction $\sqrt{\alpha_n}$ of the received signal is used for EH, while the remaining $\sqrt{1-\alpha_n}$ fraction is passed to the ID unit.

The ID unit at each receiver will introduce an additional baseband noise to the signal stream passed to the ID circuit. We assume that this additional baseband noise is an additive Gaussian random variable with zero mean and variance $\sigma_n^2$ and it should be independent of the antenna noise $z_n$. Accordingly, we can express the respective harvested energy and received SINR at the $n$th destination by
\begin{equation}\label{eq:def_harvested_energy}
\begin{split}
E_n\left(p_n, \alpha_n;  \boldsymbol{p}_{-n} \right) = \eta \alpha_n  \sum_{m\in\mathcal{N}}  p_m G_{mn},
\end{split}
\end{equation}
\begin{equation}\label{eq:def_SINR}
\begin{split}
\mathrm{SINR}_n\left(p_n, \alpha_n; \boldsymbol{p}_{-n}\right) =   \frac{\left(1- \alpha_n\right) p_n G_{nn}}{\left(1-\alpha_n\right)  \left ( \sum_{m \in \mathcal{N}/\{n\}}p_m G_{mn} + \delta_n^2 \right)  + \sigma_n^2},
\end{split}
\end{equation}
where $\boldsymbol{p}_{-n} \triangleq [p_1, \cdots, p_{n -1}, p_{n+1}, \cdots, p_N]^T$, $G_{nn} \triangleq |h_{nn}|^2$, $G_{mn} \triangleq |h_{mn}|^2$, and $0 < \eta <1$ is the energy conversion efficiency. The notation $\left(p_n, \alpha_n;  \boldsymbol{p}_{-n} \right)$ indicates that $p_n$ and $\alpha_n$ are variables given $\boldsymbol{p}_{-n}$.
Note that in (\ref{eq:def_harvested_energy}), we ignore the amount of energy harvested from the antenna noise since it is normally below the sensitivity of the energy harvesting device in practice \cite{zhou2013wireless}.

\subsection{Game Formulation}

We follow \cite{timotheou2013beamforming,shi2014joint} and assume that each destination is subject to strict QoS and EH constraints. The QoS constraint requires that the received SINR at the $n$th destination, i.e., $\mathrm{SINR}_n$ given in (\ref{eq:def_SINR}), should be no less than a predefined threshold $\gamma_n$, while the EH constraint imposes the condition that the value of $E_n$ defined in (\ref{eq:def_harvested_energy}) should be larger than or equal to the energy threshold $\mathcal{E}_n$.
In contrast to \cite{timotheou2013beamforming,shi2014joint}, which assume that all source-destination pairs are cooperative to minimize the network total transmit power, we consider an alternative non-cooperative scenario in which the source-destination pairs are all rational and self-interested such that they only want to minimize their respective transmit powers under their individual SINR and EH constraints. In this case, the results presented in \cite{timotheou2013beamforming,shi2014joint} by minimizing the network total transmit power may no longer be applicable for the considered non-cooperative scenario.
In this regard, we model the considered system by the well-established game theory \cite{han2012game}. Specifically, we formulate the following non-cooperative game:
\begin{itemize}
  \item \textit{Players}: The $N$ source-destination pairs.
  \item \textit{Actions}: Each pair determines its source transmit power and the power splitting ratio at the destination, i.e., $\left(p_n, \alpha_n\right)$, to minimize its transmit power under the SINR and EH constraints at the destination.
  \item \textit{Utilities}: The source transmit powers $p_n$.
\end{itemize}

We notice that each player's strategy $\left(p_n, \alpha_n\right)$ only depends on the transmit powers of the others $\boldsymbol{p}_{-n}$, as the power splitting ratio of each player only operates at the destination side. Thus, given the transmit powers of the others $\boldsymbol{p}_{-n}$, the best response strategy of the $n$th pair (player) is the solution to the following optimization problem,
\begin{equation}\label{Prob.1}
\begin{split}
& \min_{\{p_n,\alpha_n\}}~ p_n,\\
\mathrm{s.t.}~& \mathrm{C1}: 0 \leq \alpha_n < 1,\\
& \mathrm{C2}: p_n \geq 0,\\
&  \mathrm{C3}: \frac{\left(1- \alpha_n\right) p_n G_{nn}}{\left(1-\alpha_n\right)  \left( \sum_{m \in \mathcal{N}/\{n\}} p_m G_{mn} + \delta_n^2\right) + \sigma_n^2} \geq  \gamma_n,\\
& \mathrm{C4}: \eta \alpha_n  \sum_{m\in\mathcal{N}}  p_m G_{mn} \geq \mathcal{E}_n.
\end{split}
\end{equation}

We denote by $\left(p_n^\star, \alpha_n^\star\right)$ the optimal solution to the optimization problem (\ref{Prob.1}). It is straightforward to verify that (\ref{Prob.1}) always has a feasible solution \cite{Shi_TWC_2014_Joint}. Considering that $p_n$ and $\alpha_n$ should be jointly optimized and they are all dependent on $\boldsymbol{p}_{-n}$, we can rewrite $\alpha_n$ as a function of $p_n$ and $\boldsymbol{p}_{-n}$, i.e., $\alpha_n = f_n \left(p_n, \boldsymbol{p}_{-n}\right)$. Then, we denote by $\mathcal{P}_n\left(\boldsymbol{p}_{-n}\right)$ the feasible power policies of the $n$th pair given the others' power strategies, which can be expressed as
\begin{equation}
\begin{split}
\mathcal{P}_n \left(\boldsymbol{p}_{-n}\right) \triangleq   \left \{  p_n > 0 :\right. & \mathrm{SINR}_n \left( p_n,  f_n \left(p_n, \boldsymbol{p}_{-n}\right); \boldsymbol{p}_{-n} \right) \geq \gamma_n, \\
& \left. E_n \left(p_n,  f_n \left(p_n, \boldsymbol{p}_{-n}\right);  \boldsymbol{p}_{-n} \right) \geq \mathcal{E}_n \right\}.
\end{split}
\end{equation}

Now, we define $\boldsymbol{p}^\star = \left[p_1^\star, \cdots, p_N^\star\right]^T$ and $\boldsymbol{\alpha}^\star = \left[\alpha_1^\star, \cdots, \alpha_N^\star\right]^T$ and denote by $\left(\boldsymbol{p}^\star, \boldsymbol{\alpha}^\star\right)$ the solution (if exists) to the formulated non-cooperative game, which is well-known as the NE~\cite{han2012game}. A NE of the formulated game is a feasible profile $\left(\boldsymbol{p}^\star, \boldsymbol{\alpha}^\star\right)$ that satisfies
\begin{equation}
\begin{cases}
p_n^\star \leq p_n\\
\alpha_n^\star =f_n \left(p_n^\star, \boldsymbol{p}_{-n}^\star\right)
\end{cases}
,~ \forall p_n \in \mathcal{P}_n\left(\boldsymbol{p}_{-n}^\star\right),~\forall n \in \mathcal{N}.
\end{equation}

\section{Existence and Uniqueness of the NE}

In this section, we first analyze the best response strategy of each source-destination pair by solving the optimization problem (\ref{Prob.1}). Then, we derive a sufficient and necessary condition that guarantees the existence and uniqueness for the NE of the formulated game.

\subsection{The Best Response Strategy}
Now, we calculate the best response strategy $(p_n^\star, \alpha_n^\star)$ for the $n$th pair by solving the optimization problem (\ref{Prob.1}), which is described in the following proposition,
\begin{proposition}
Given $\boldsymbol{p}_{-n}$, the optimal response strategy of the $n$th pair can be expressed as
\begin{equation}\label{Eq.BRS}
\begin{cases}
p_n^\star =\frac{ -  X_n + Y_n + \gamma_n X_n + \gamma_n \sigma_n^2  + \sqrt{\Delta_n}}{2G_{nn}},\\
\alpha_n^\star  = \frac{X_n + Y_n + \gamma_n X_n +\gamma_n \sigma_n^2 - \sqrt{\Delta_n}}{2(X_n + \gamma_n X_n)},
\end{cases}
\end{equation}
where $X_n = \sum_{m \in \mathcal{N}/\{n\}} p_m G_{mn} + \delta_n^2$, $Y_n = \frac{\mathcal{E}_n}{\eta}$ and $\Delta _n = \left( X_n - Y_n  +  \gamma_n X_n  +  \gamma_n \sigma_n^2 \right)^2 + 4 \gamma_n Y_n \sigma_n^2$.
\end{proposition}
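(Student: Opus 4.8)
The plan is to fix the opponents' powers $\boldsymbol{p}_{-n}$ (hence the aggregate $X_n=\sum_{m\in\mathcal{N}/\{n\}}p_mG_{mn}+\delta_n^2$) and to reduce the two-variable problem (\ref{Prob.1}) to a one-dimensional minimization over the power-splitting ratio $\alpha_n$. For a fixed $\alpha_n\in[0,1)$, constraint $\mathrm{C3}$ is equivalent to $p_n\ge g_3(\alpha_n):=\frac{\gamma_nX_n}{G_{nn}}+\frac{\gamma_n\sigma_n^2}{(1-\alpha_n)G_{nn}}$, and (for $\alpha_n>0$) constraint $\mathrm{C4}$ is equivalent to $p_n\ge g_4(\alpha_n):=\frac{1}{G_{nn}}\!\left(\frac{Y_n}{\alpha_n}-X_n\right)$ with $Y_n=\mathcal{E}_n/\eta$. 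Together with $\mathrm{C2}$ this means the least transmit power compatible with a given $\alpha_n$ is $\max\{g_3(\alpha_n),g_4(\alpha_n)\}$ (the bound $p_n\ge 0$ is inactive because $g_3>0$), so that $p_n^\star=\min_{0\le\alpha_n<1}\max\{g_3(\alpha_n),g_4(\alpha_n)\}$.

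The next step exploits monotonicity. Under the natural assumptions $\gamma_n,\sigma_n^2,G_{nn}>0$, $g_3$ is continuous, strictly increasing on $[0,1)$, and diverges as $\alpha_n\to1^-$, while $g_4$ is continuous, strictly decreasing on $(0,1)$, diverges as $\alpha_n\to0^+$, and has a finite limit at $1^-$. Hence $g_3-g_4$ is strictly increasing and changes sign exactly once on $(0,1)$; accordingly $\max\{g_3,g_4\}$ coincides with the decreasing branch $g_4$ to the left of that crossing and with the increasing branch $g_3$ to its right, so the outer minimum is attained at the unique $\alpha_n^\star\in(0,1)$ with $g_3(\alpha_n^\star)=g_4(\alpha_n^\star)$. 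Equivalently, at the optimum both $\mathrm{C3}$ and $\mathrm{C4}$ are active, and the existence of this interior crossing simultaneously confirms that (\ref{Prob.1}) is feasible and that its infimum is attained. (The degenerate case $\mathcal{E}_n=0$ gives $Y_n=0$, makes $\mathrm{C4}$ vacuous, and yields $\alpha_n^\star=0$, $p_n^\star=g_3(0)=\gamma_n(X_n+\sigma_n^2)/G_{nn}$, which one checks agrees with (\ref{Eq.BRS}).)

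It remains to solve the two binding constraints simultaneously. From $\mathrm{C4}$ at equality, $\alpha_n=Y_n/(p_nG_{nn}+X_n)$, hence $1-\alpha_n=(p_nG_{nn}+X_n-Y_n)/(p_nG_{nn}+X_n)$; substituting this into the equality form of $\mathrm{C3}$, namely $(1-\alpha_n)(p_nG_{nn}-\gamma_nX_n)=\gamma_n\sigma_n^2$, and writing $A=p_nG_{nn}$, reduces the system to the single quadratic
\[
A^2+\left(X_n-\gamma_nX_n-Y_n-\gamma_n\sigma_n^2\right)A-\gamma_nX_n\left(X_n-Y_n+\sigma_n^2\right)=0 ,
\]
whose discriminant collapses to $\Delta_n=\left(X_n-Y_n+\gamma_nX_n+\gamma_n\sigma_n^2\right)^2+4\gamma_nY_n\sigma_n^2\ge 0$. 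Of the two roots only the ``$+$'' one is admissible --- it is the one with $A\ge 0$ (from $\mathrm{C2}$) and with $\alpha_n=Y_n/(A+X_n)\in[0,1)$ (equivalently $1-\alpha_n>0$ and $A>\gamma_nX_n$, the sign pattern dictated by the crossing $g_3=g_4$), the other root violating $\mathrm{C1}$ or $\mathrm{C2}$ by a short check on the product and sum of the roots. This gives $p_n^\star=A/G_{nn}$ as in (\ref{Eq.BRS}); feeding $A$ back into $\alpha_n^\star=Y_n/(p_n^\star G_{nn}+X_n)$ and rationalizing the denominator (using $\left(X_n+Y_n+\gamma_nX_n+\gamma_n\sigma_n^2\right)^2-\Delta_n=4(1+\gamma_n)X_nY_n$) yields the stated closed form for $\alpha_n^\star$.

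I expect the main obstacle to be the second paragraph: arguing rigorously that \emph{both} constraints are active at the optimum and that the $g_3$--$g_4$ crossing lies in the open interval, so the infimum is genuinely attained and the degenerate cases are covered, together with the bookkeeping that singles out the feasible root. Once that is settled, checking that the discriminant reduces to $\Delta_n$ and that the ``$+$'' root reproduces (\ref{Eq.BRS}) is routine algebra.
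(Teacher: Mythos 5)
Your proof is correct and reaches exactly the stated closed forms; at its core it does the same thing as the paper (argue that C3 and C4 both bind, solve the resulting $2\times 2$ system, and discard the inadmissible root of the ensuing quadratic), but it differs in two worthwhile ways. First, where the paper simply asserts that both constraints must hold with equality ``otherwise we can always reduce $p_n$'' --- an argument that is not airtight on its own, since reducing $p_n$ when only one constraint is slack can violate the other unless $\alpha_n$ is adjusted too --- you make this rigorous by recasting the problem as $\min_{\alpha_n}\max\{g_3(\alpha_n),g_4(\alpha_n)\}$ and using the opposite monotonicities of $g_3$ and $g_4$ to locate the optimum at their unique crossing; this also gives feasibility and attainment of the infimum for free, which the paper only cites. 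Second, you eliminate in the opposite order: the paper solves a quadratic in $\alpha_n$ and rejects the ``$+$'' root because it exceeds $1$, whereas you solve a quadratic in $A=p_nG_{nn}$, keep the ``$+$'' root, and recover $\alpha_n^\star$ by back-substitution and rationalization; the two computations are algebraically equivalent (your discriminant collapses to the same $\Delta_n$). The only minor looseness is that your root-selection step (``a short check on the product and sum of the roots'') is sketched rather than carried out --- one should verify explicitly that the ``$-$'' root gives $A_-+X_n<Y_n$, hence $\alpha_n\geq 1$, violating C1 --- and note that, like the paper's Appendix A, you silently use $\sum_{m}p_mG_{mn}=p_nG_{nn}+X_n$ in C4 even though $X_n$ contains $\delta_n^2$ while the definition of $E_n$ in (\ref{eq:def_harvested_energy}) does not; you inherit the paper's convention there, so this is not a defect of your argument relative to the intended statement.
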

\begin{proof}
See Appendix A.
\end{proof}

The best response strategy given in Proposition 1 confirms our discussion that the choice of $\alpha_m^\star$, $\forall m\in\mathcal{N} / \{n\}$, of the others will not affect the decision of the $n$th pair. Therefore, the competitive interaction among the players is actually proceeded by adjusting their own power allocation strategy. We thus define the best response function at the $n$th pair as
\begin{equation}\label{Eq.BR}
\begin{split}
p_n^\star = \mathcal{B}_n\left(\boldsymbol{p}_{-n}\right) = \frac{ -  X_n + Y_n + \gamma_n X_n + \gamma_n \sigma_n^2  + \sqrt{\Delta_n}}{2G_{nn}}, ~\forall n \in\mathcal{N}.
\end{split}
\end{equation}
Besides, based on (\ref{Eq.BRS}), we have
\begin{equation}
\begin{split}
f_n \left(p_n, \boldsymbol{p}_{-n}\right) =  \frac{X_n + Y_n + \gamma_n X_n +\gamma_n \sigma_n^2 - \sqrt{\Delta_n}}{2(X_n + \gamma_n X_n)}.
\end{split}
\end{equation}
The NE of the non-cooperative game can now be redefined as
\begin{equation}\label{NE}
\begin{cases}
p_n^\star = \mathcal{B}_n\left(\boldsymbol{p}_{-n}^\star\right)\\
\alpha_n^\star = f_n \left(p_n^\star, \boldsymbol{p}_{-n}^\star\right)
\end{cases}
,~\forall n \in\mathcal{N},
\end{equation}
which can be readily achieved with the well-known \emph{best-response dynamics} \cite{han2012game} if its existence and uniqueness are guaranteed. {It is worth mentioning that the adopted best-response dynamics has a low implementation and computation complexity to achieve the NE of the formulated game in a fully distributed manner. In particular, each link only needs to measure its own channel gain (i.e., $G_{nn}$) and the power of the interference from all other links and the antenna noise (i.e., $X_n$ defined in Proposition 1), which could be realized by equipping a radio scene analyzer at each destination as in \cite{Shi_TWC_2009}. With these local information, each link can easily compute its best response strategy  based on (\ref{Eq.BRS}) and the NE of the formulated non-cooperative game can be readily achieved by the best-response dynamics.}

Note that the NE of the formulated game does not always exist as the EH and SINR constraints of all source-destination pairs cannot be simultaneously satisfied for some special cases. Motivated by this and inspired by the existing literature, we derive a sufficient and necessary condition in the following subsections to guarantee the existence and uniqueness of the NE.

{
\subsection{Existence of NE}
From the structure of the optimization problem (\ref{Prob.1}) at each source-destination pair, we can observe that the NE of the formulated game is existent only when the conditions (C3) and (C4) for all pairs can be met at the same time. Starting from this observation, we obtain the following proposition regarding a sufficient and necessary condition for the existence of NE:
\begin{proposition}
Define a square matrix $\boldsymbol{\Omega} \in \mathbb{R}^{N\times N}$ as
\begin{equation}\label{Eq.Ome}
\left[\boldsymbol{\Omega}\right]_{n,m} =
\begin{cases}
0, &\mbox{if ~$m = n$},\\
 \frac{G_{mn} \gamma_n}{G_{nn}} , &\mbox{if ~$m \neq n$}.
\end{cases}
\end{equation}
The NE of the formulated game exists if and only if (iff) the spectral radius of $\boldsymbol{\Omega}$, denoted by $\rho(\boldsymbol{\Omega})$, is less than one, where the spectral radius of the matrix $\boldsymbol{\Omega}$ is defined as the largest absolute eigenvalue of $\boldsymbol{\Omega}$, i.e., $\rho(\boldsymbol{\Omega}) \triangleq \max \limits_{i} \left(|\lambda_i|\right)$ with $\lambda_i$'s representing eigenvalues of $\boldsymbol{\Omega}$ \cite{hornmatrix}.
\end{proposition}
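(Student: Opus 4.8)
The plan is to read (\ref{NE}) as the fixed-point equation $\boldsymbol{p}=\mathcal{B}(\boldsymbol{p})$ for the best-response map $\mathcal{B}(\boldsymbol{p})\triangleq\big[\mathcal{B}_1(\boldsymbol{p}_{-1}),\dots,\mathcal{B}_N(\boldsymbol{p}_{-N})\big]^{T}$ of (\ref{Eq.BR}), and to observe that $\mathcal{B}$ is, \emph{uniformly} in $\boldsymbol{p}$, a bounded perturbation of an affine map whose linear part is exactly $\boldsymbol{\Omega}$. Existence of a fixed point then reduces to the classical feasibility of a linear interference system, which is precisely the condition $\rho(\boldsymbol{\Omega})<1$.

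The first step is to sandwich $\mathcal{B}_n$ between two affine functions sharing the \emph{same} linear part. Writing $\Delta_n=a_n^{2}+4\gamma_nY_n\sigma_n^{2}$ with $a_n\triangleq(\gamma_n+1)X_n-Y_n+\gamma_n\sigma_n^{2}$, and using only the elementary estimates $a_n\le|a_n|\le\sqrt{\Delta_n}\le|a_n|+2\sqrt{\gamma_nY_n}\,\sigma_n$ and $|a_n|\le(\gamma_n+1)X_n+Y_n+\gamma_n\sigma_n^{2}$ together with $X_n=\sum_{m\ne n}p_mG_{mn}+\delta_n^{2}$, one gets from (\ref{Eq.BR}) that
\begin{equation}\label{plan:sandwich}
\sum_{m\ne n}\frac{\gamma_nG_{mn}}{G_{nn}}\,p_m+\underline{c}_n\;\le\;\mathcal{B}_n(\boldsymbol{p}_{-n})\;\le\;\sum_{m\ne n}\frac{\gamma_nG_{mn}}{G_{nn}}\,p_m+\bar{c}_n,\qquad\forall\,\boldsymbol{p}_{-n}\ge\boldsymbol{0},
\end{equation}
where $\underline{c}_n=\gamma_n(\delta_n^{2}+\sigma_n^{2})/G_{nn}>0$ and $\bar{c}_n=\big(\gamma_n\delta_n^{2}+Y_n+\gamma_n\sigma_n^{2}+\sqrt{\gamma_nY_n}\,\sigma_n\big)/G_{nn}>0$. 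In matrix form, with $\boldsymbol{\Omega}$ as in (\ref{Eq.Ome}), this reads $\boldsymbol{\Omega}\boldsymbol{p}+\underline{\boldsymbol{c}}\le\mathcal{B}(\boldsymbol{p})\le\boldsymbol{\Omega}\boldsymbol{p}+\bar{\boldsymbol{c}}$ componentwise with $\underline{\boldsymbol{c}},\bar{\boldsymbol{c}}>\boldsymbol{0}$; I would also record that $\mathcal{B}_n$ is continuous and strictly positive for every $\boldsymbol{p}_{-n}\ge\boldsymbol{0}$, since $X_n\ge\delta_n^{2}>0$ and $\Delta_n\ge0$.

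For \textbf{necessity}, if a NE $\boldsymbol{p}^{\star}$ exists then (\ref{NE}) and the lower bound in (\ref{plan:sandwich}) give $(\boldsymbol{I}-\boldsymbol{\Omega})\boldsymbol{p}^{\star}\ge\underline{\boldsymbol{c}}>\boldsymbol{0}$ with $\boldsymbol{p}^{\star}>\boldsymbol{0}$ and $\boldsymbol{\Omega}\ge\boldsymbol{0}$. Setting $\beta\triangleq\max_{n}\{[\boldsymbol{\Omega}\boldsymbol{p}^{\star}]_{n}/p_{n}^{\star}\}<1$ yields $\boldsymbol{\Omega}\boldsymbol{p}^{\star}\le\beta\boldsymbol{p}^{\star}$, hence $\boldsymbol{\Omega}^{k}\boldsymbol{p}^{\star}\le\beta^{k}\boldsymbol{p}^{\star}\to\boldsymbol{0}$, and since $\boldsymbol{p}^{\star}>\boldsymbol{0}$ this forces $\boldsymbol{\Omega}^{k}\to\boldsymbol{0}$, i.e.\ $\rho(\boldsymbol{\Omega})<1$; equivalently, the existence of a strictly positive $\boldsymbol{x}$ with $(\boldsymbol{I}-\boldsymbol{\Omega})\boldsymbol{x}>\boldsymbol{0}$ is one of the standard characterizations of $\rho(\boldsymbol{\Omega})<1$ for a nonnegative matrix (the nonsingular M-matrix property, \cite{hornmatrix}). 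For \textbf{sufficiency}, if $\rho(\boldsymbol{\Omega})<1$ then $(\boldsymbol{I}-\boldsymbol{\Omega})^{-1}=\sum_{k\ge0}\boldsymbol{\Omega}^{k}\ge\boldsymbol{0}$ exists; put $\bar{\boldsymbol{p}}\triangleq(\boldsymbol{I}-\boldsymbol{\Omega})^{-1}\bar{\boldsymbol{c}}>\boldsymbol{0}$, so that $\boldsymbol{\Omega}\bar{\boldsymbol{p}}+\bar{\boldsymbol{c}}=\bar{\boldsymbol{p}}$, and let $\mathcal{K}\triangleq\{\boldsymbol{p}:0\le p_n\le\bar{p}_n,~\forall n\}$, a nonempty compact convex box. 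For every $\boldsymbol{p}\in\mathcal{K}$, the upper bound in (\ref{plan:sandwich}) gives $0<\mathcal{B}_n(\boldsymbol{p}_{-n})\le\sum_{m\ne n}(\gamma_nG_{mn}/G_{nn})\bar{p}_m+\bar{c}_n=\bar{p}_n$, so $\mathcal{B}$ maps $\mathcal{K}$ continuously into itself, and Brouwer's fixed-point theorem produces $\boldsymbol{p}^{\star}\in\mathcal{K}$ with $\mathcal{B}(\boldsymbol{p}^{\star})=\boldsymbol{p}^{\star}$; then $(\boldsymbol{p}^{\star},\boldsymbol{\alpha}^{\star})$ with $\alpha_n^{\star}=f_n(p_n^{\star},\boldsymbol{p}_{-n}^{\star})$ is a NE by (\ref{NE}).

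The algebra behind (\ref{plan:sandwich}) is routine; the delicate point — and what makes the condition \emph{sharp} rather than merely sufficient — is that \emph{both} bounds must carry exactly the coefficient $\gamma_nG_{mn}/G_{nn}$, so that the EH requirement $Y_n$ enters only through the bounded offsets $\underline{c}_n,\bar{c}_n$ and does not move the stability threshold. I would also flag that, because of the EH constraint, $\mathcal{B}_n$ need not be monotone in the aggregate interference $X_n$, which is why I would use Brouwer's theorem for existence rather than the standard-interference-function machinery; for the same reason, convergence of the best-response dynamics mentioned after (\ref{NE}) needs a separate argument.
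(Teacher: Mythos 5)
Your proof is correct, and it takes a genuinely different route from the paper's. The paper argues at the level of \emph{constraint feasibility}: it stacks the SINR constraints into the linear system $(\mathbf{I}-\boldsymbol{\Omega})\boldsymbol{p}>\boldsymbol{u}$, invokes the Perron--Frobenius characterization to get $\rho(\boldsymbol{\Omega})<1$ as the feasibility condition, and then disposes of the EH constraints by noting that any feasible $\boldsymbol{p}$ can be scaled up until $\mathbf{G}\boldsymbol{p}>\boldsymbol{\mathcal{E}}$ holds as well. You instead work directly with the fixed-point equation $\boldsymbol{p}=\mathcal{B}(\boldsymbol{p})$ that defines the NE in (\ref{NE}), sandwiching $\mathcal{B}_n$ between two affine maps whose common linear part is exactly $\boldsymbol{\Omega}$ (your bounds check out: $\sqrt{\Delta_n}\ge a_n$ gives the lower offset $\gamma_n(\delta_n^2+\sigma_n^2)/G_{nn}$, and $\sqrt{\Delta_n}\le|a_n|+2\sqrt{\gamma_nY_n}\,\sigma_n$ gives the upper one), then using the nonsingular M-matrix characterization for necessity and Brouwer on the invariant box $[\boldsymbol{0},(\mathbf{I}-\boldsymbol{\Omega})^{-1}\bar{\boldsymbol{c}}]$ for sufficiency. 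What your approach buys is rigor on the sufficiency side: the paper implicitly equates ``the joint constraint set is nonempty'' with ``a NE exists,'' but a jointly feasible profile is not automatically a fixed point of the best response, so an extra existence argument is really needed there — your Brouwer step supplies it. Your observation that $\mathcal{B}_n$ need not be monotone in the aggregate interference $X_n$ (when $Y_n$ dominates, more interference can \emph{lower} the required power) is also correct and explains why the standard-interference-function route is unavailable and why the two-sided affine sandwich is the right tool. What the paper's route buys is brevity and a transparent link to classical power-control feasibility; it also makes explicit the physically meaningful fact that the EH constraints never tighten the existence condition, which in your argument is absorbed silently into the constant offsets $\underline{c}_n,\bar{c}_n$.
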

\begin{proof}
To proceed, we first rewrite all SINR constraints (i.e., (C3)'s) in a matrix form. After some algebra manipulations, we get $\left( {{\bf{I}} - {\bf{\Omega }}} \right) {\pmb p} > \pmb u$ with ${\pmb p} > \pmb 0$ (component-wise), where $\bf{I}$ is the identity matrix, ${\pmb p} = \left[p_1, \cdots, p_N\right]^T$,
\[\pmb u = {\left[ {\frac{{{\gamma _1}\left( {\delta _1^2 + \frac{{\sigma _1^2}}{{1 - {\alpha _1}}}} \right)}}{{{G_{11}}}},
\ldots ,\frac{{{\gamma _N}\left( {\delta _N^2 + \frac{{\sigma _N^2}}{{1 - {\alpha _N}}}} \right)}}{{{G_{NN}}}}} \right]^T},\]
and the matrix $\boldsymbol{\Omega}$ can be expressed as (\ref{Eq.Ome}). With reference to \cite{Bambos_Toward_1998} (i.e., Perron-Frobenious theorem), the existence of a feasible power profile ${\pmb p} > \pmb 0$ satisfying $\left( {{\bf{I}} - {\bf{\Omega }}} \right) {\pmb p} > \pmb u$ is equivalent to $\rho(\boldsymbol{\Omega})< 1$.

We are now rewriting all EH constraints constraints in a matrix form and have
${\bf{G}}{\pmb p} > {\pmb {\mathcal E}}$, where $\bf{G}$ is a $N\times N$ matrix with $\left[{\bf{G}}\right]_{n,m} = \alpha_nG_{mn}$ and the vector ${\pmb {\mathcal E}} = \left[{\mathcal E}_1/\eta,\ldots,{\mathcal E}_N/\eta\right]^T$. It is readily to verify that if there exists a feasible power profile ${\pmb p} > \pmb 0$ satisfying the SINR constraint inequality $\left( {{\bf{I}} - {\bf{\Omega }}} \right) {\pmb p} > \pmb u$, we can always guarantee that the energy constraint inequality ${\bf{G}}{\pmb p} > {\pmb {\mathcal E}}$ is also satisfied by scaling up the feasible profile ${\pmb p}$. Thus, the overall sufficient and necessary condition for the existence of NE is $\rho(\boldsymbol{\Omega})< 1$, which completes the proof.
\end{proof}

It is interesting to observe from Proposition 2 that the condition for the existence of NE only depends on the channel gains and SINR thresholds. Also, the derived condition is actually the same as that for the power minimization problem in conventional IFCs with only SINR constraints, which was given and proved in \cite{nguyen2011multiuser}. This indicates that the introduction of EH constraints does not lead to a stricter condition for the existence of the~NE.

Since a condition is needed to guarantee the existence of NE, there is nonzero probability that the derived condition is not satisfied in certain cases. In these cases, admission control schemes should be adopted to opt out a few source-destination pairs to guarantee the existence of NE among the remaining active pairs. Optimal strategies to opt several pairs out the formulated game in the rounds when the derived condition is not satisfied would be quite difficult to achieve, and are actually beyond the scope of this paper. Moreover, this admission control issue is actually not specific to the adopted game-theoretic approach but rather arises regardless of the method used. The interested reader is referred to \cite{Wu_TVT_2001,Berggren_TVT_2001} and references therein.
}
\subsection{Uniqueness of NE} \label{SC.}
The best-response dynamics, i.e., the iterative behaviors of each player with its best response strategy, can be regarded as a mapping process \cite{bertsekas1989parallel}. A sufficient condition to guarantee the uniqueness of the NE is equivalent to the condition to guarantee the mapping process is a contraction, which then implies that the mapping has a unique fixed point. We then refer to the contraction mapping theorem~\cite{bertsekas1989parallel}, i.e.,
{
\begin{theorem}
A mapping $T$: $\mathcal{X} \rightarrow \mathcal{X}$ ($\mathcal{X}$ is a closed subset of the real number set $\mathbb{R}^N$) is a contraction if and only if $|| T(x) - f(y)|| \leq \beta ||x - y|| $, $\forall x, y \in \mathcal{X}$ and $\beta \in [0,1)$. Moreover, the mapping $T$ has a unique fixed point if it is a contraction.
\end{theorem}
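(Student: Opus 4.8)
The plan is to treat the biconditional as merely the \emph{definition} of a contraction, so that the only substantive assertion is the ``moreover'' clause: a contraction $T$ on a closed set $\mathcal{X}\subseteq\mathbb{R}^N$ possesses exactly one fixed point. I would prove existence by an explicit iteration and uniqueness by a short algebraic argument, following the classical Banach fixed-point reasoning.

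For uniqueness, suppose $x^\star$ and $y^\star$ both satisfy $T(x^\star)=x^\star$ and $T(y^\star)=y^\star$. Applying the contraction inequality with this pair gives $\|x^\star-y^\star\| = \|T(x^\star)-T(y^\star)\| \le \beta\|x^\star-y^\star\|$, hence $(1-\beta)\|x^\star-y^\star\|\le 0$; since $\beta\in[0,1)$ this forces $x^\star=y^\star$.

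For existence, I would fix an arbitrary $x_0\in\mathcal{X}$ and form the iterates $x_{k+1}=T(x_k)$, which stay in $\mathcal{X}$ because $T$ maps $\mathcal{X}$ into itself. A one-line induction using the contraction property yields $\|x_{k+1}-x_k\|\le \beta^k\|x_1-x_0\|$. Then, for any $m>k$, the triangle inequality together with the geometric-series bound gives $\|x_m-x_k\|\le \sum_{j=k}^{m-1}\beta^j\|x_1-x_0\|\le \frac{\beta^k}{1-\beta}\|x_1-x_0\|$, which tends to $0$ as $k\to\infty$; hence $\{x_k\}$ is Cauchy. Since $\mathbb{R}^N$ is complete and $\mathcal{X}$ is closed, the sequence converges to some $x^\star\in\mathcal{X}$. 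Finally, the defining inequality makes $T$ Lipschitz-continuous, so passing to the limit in $x_{k+1}=T(x_k)$ gives $x^\star=T(x^\star)$; by the uniqueness argument above this fixed point is the only one.

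The one delicate point is the Cauchy estimate: one must bound the tail of the geometric series \emph{uniformly} in $m$ so that the resulting bound depends only on $k$, and one must invoke completeness of $\mathbb{R}^N$ jointly with closedness of $\mathcal{X}$ to ensure the limit remains in the domain. The remaining ingredients—continuity of a contraction, needed to move the limit through $T$—are immediate from the defining inequality, so no further machinery is required.
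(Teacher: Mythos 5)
Your proof is correct. The paper itself does not prove this statement at all --- it simply quotes it as the contraction mapping theorem with a citation to Bertsekas and Tsitsiklis --- so there is no in-paper argument to compare against; your self-contained treatment is exactly the classical Banach fixed-point proof (uniqueness from the contraction inequality, existence via Picard iteration, the uniform geometric tail bound to get the Cauchy property, and completeness of $\mathbb{R}^N$ together with closedness of $\mathcal{X}$ to keep the limit in the domain). Two small remarks: you are right to read the paper's ``$\|T(x)-f(y)\|$'' as a typo for $\|T(x)-T(y)\|$, and to treat the ``if and only if'' clause as the definition of a contraction rather than something requiring proof --- the only substantive content is the fixed-point claim, which you establish fully.
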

}

Note that the well-known contraction mapping theorem has been widely adopted in open literature due to its following advantages \cite{hogan2009new}. First, it can guarantee the uniqueness of the NE. Second, it does not require that the space being mapped onto itself is convex or bounded. Third, it has the inherent convergence property. Additionally, it is also pointed out in \cite{hogan2009new} that the contraction mapping theorem provides a quite general condition for the mapping to have a unique fixed~point. Then, we can have the following proposition regarding the uniqueness of the~NE:
{
\begin{proposition}
If the NE of the formulated non-cooperative game exists (i.e., the condition $\rho(\boldsymbol{\Omega})< 1$ in Proposition 2 is satisfied), then the game has a unique NE.
\end{proposition}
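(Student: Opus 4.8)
The plan is to invoke the contraction mapping theorem (Theorem~1): it suffices to show that the best-response map $\mathcal{B}(\boldsymbol{p}) \triangleq \left[\mathcal{B}_1(\boldsymbol{p}_{-1}),\ldots,\mathcal{B}_N(\boldsymbol{p}_{-N})\right]^T$, with $\mathcal{B}_n$ as in~(\ref{Eq.BR}), is a contraction on the closed set $\mathcal{X} \triangleq \left\{\boldsymbol{p}\in\mathbb{R}^N : \boldsymbol{p}\geq\mathbf{0}\right\}$ under a suitably chosen norm. From~(\ref{Eq.BRS}) one has $\mathcal{B}_n(\boldsymbol{p}_{-n})>0$ for every $\boldsymbol{p}_{-n}$, so $\mathcal{B}$ maps $\mathcal{X}$ into itself; and once $\mathcal{B}$ is shown to have a unique fixed point $\boldsymbol{p}^\star$, setting $\alpha_n^\star = f_n(p_n^\star,\boldsymbol{p}_{-n}^\star)$ for each $n$ gives the unique NE in the sense of~(\ref{NE}). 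The structural fact that makes everything tractable, already noted after Proposition~1, is that $\mathcal{B}_n$ depends on $\boldsymbol{p}_{-n}$ only through the scalar $X_n = \sum_{m\in\mathcal{N}/\{n\}}p_mG_{mn}+\delta_n^2$, which is affine and strictly increasing in each $p_m$ ($m\neq n$) with slope $G_{mn}$; hence, by the chain rule, $\partial\mathcal{B}_n/\partial p_m = (\partial\mathcal{B}_n/\partial X_n)\,G_{mn}$ for $m\neq n$, while $\partial\mathcal{B}_n/\partial p_m = 0$ for $m=n$.

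The technical core is the bound on $\partial\mathcal{B}_n/\partial X_n$. Writing $W_n \triangleq (1+\gamma_n)X_n - Y_n + \gamma_n\sigma_n^2$, so that $\Delta_n = W_n^2 + 4\gamma_n Y_n\sigma_n^2$, differentiating~(\ref{Eq.BR}) gives
\begin{equation}
\frac{\partial\mathcal{B}_n}{\partial X_n} = \frac{1}{2G_{nn}}\left[(\gamma_n-1) + (1+\gamma_n)\frac{W_n}{\sqrt{\Delta_n}}\right].
\end{equation}
Since $W_n^2 \leq \Delta_n$ gives $W_n/\sqrt{\Delta_n}\in[-1,1]$, I would then verify, via a case analysis on the sign of $W_n$, that $\left|\partial\mathcal{B}_n/\partial X_n\right| \leq \gamma_n/G_{nn}$, so that
\begin{equation}
\left|\frac{\partial\mathcal{B}_n}{\partial p_m}\right| \leq \frac{\gamma_n G_{mn}}{G_{nn}} = \left[\boldsymbol{\Omega}\right]_{n,m}, \quad \forall\, m,n\in\mathcal{N};
\end{equation}
that is, the Jacobian of $\mathcal{B}$ is dominated entrywise by the nonnegative matrix $\boldsymbol{\Omega}$ of~(\ref{Eq.Ome}).

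To conclude, invoke the hypothesis $\rho(\boldsymbol{\Omega})<1$ (Proposition~2, which also guarantees the NE is nonempty). Fix any $\beta$ with $\rho(\boldsymbol{\Omega})<\beta<1$; then, by the Neumann series, $\boldsymbol{w} \triangleq (\mathbf{I}-\boldsymbol{\Omega}/\beta)^{-1}\mathbf{1}$ is well defined, satisfies $\boldsymbol{w}\geq\mathbf{1}>\mathbf{0}$, and obeys $\boldsymbol{\Omega}\boldsymbol{w}\leq\beta\boldsymbol{w}$ componentwise. Equip $\mathbb{R}^N$ with the weighted norm $\|\boldsymbol{x}\|_{\boldsymbol{w}} \triangleq \max_n |x_n|/w_n$. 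For any $\boldsymbol{p},\boldsymbol{q}\in\mathcal{X}$, the mean value theorem applied along the segment joining $\boldsymbol{p}$ and $\boldsymbol{q}$, together with the Jacobian bound, gives for each $n$
\begin{equation}
\left|\mathcal{B}_n(\boldsymbol{p}_{-n})-\mathcal{B}_n(\boldsymbol{q}_{-n})\right| \leq \sum_{m\in\mathcal{N}}\left[\boldsymbol{\Omega}\right]_{n,m}|p_m-q_m| \leq (\boldsymbol{\Omega}\boldsymbol{w})_n\,\|\boldsymbol{p}-\boldsymbol{q}\|_{\boldsymbol{w}} \leq \beta\, w_n\,\|\boldsymbol{p}-\boldsymbol{q}\|_{\boldsymbol{w}},
\end{equation}
and dividing by $w_n$ and maximizing over $n$ yields $\|\mathcal{B}(\boldsymbol{p})-\mathcal{B}(\boldsymbol{q})\|_{\boldsymbol{w}}\leq\beta\|\boldsymbol{p}-\boldsymbol{q}\|_{\boldsymbol{w}}$. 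Hence $\mathcal{B}$ is a contraction on $\mathcal{X}$, so by Theorem~1 it has a unique fixed point, which is the unique NE.

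The main obstacle I anticipate is the bound $\left|\partial\mathcal{B}_n/\partial X_n\right|\leq\gamma_n/G_{nn}$ --- equivalently, the entrywise domination of the Jacobian by $\boldsymbol{\Omega}$. The quantity $W_n$ can become negative when the EH target $\mathcal{E}_n$ (hence $Y_n=\mathcal{E}_n/\eta$) is large relative to the interference-plus-noise level $X_n$, in which case $W_n/\sqrt{\Delta_n}<0$ and $\partial\mathcal{B}_n/\partial X_n$ drifts toward $-1/G_{nn}$; checking the bound carefully in that regime (or pinning down the operating conditions under which it holds) is where the real effort lies. Everything downstream of the Jacobian bound --- the weighted-norm construction and the appeal to Theorem~1 --- is standard.
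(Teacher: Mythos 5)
Your overall route is the same as the paper's Appendix B: show that the best-response map is dominated, link by link, by the matrix $\boldsymbol{\Omega}$ of~(\ref{Eq.Ome}), and then convert $\rho(\boldsymbol{\Omega})<1$ into a contraction under a weighted maximum norm. The differences are cosmetic: the paper works with the finite difference $\mathcal{T}_n(\boldsymbol{p})-\mathcal{T}_n(\boldsymbol{p}')$, factors $\Delta_n-\Delta_n'$, and bounds a ratio $Z_n$ with $|Z_n|<1$, where you differentiate and apply the mean value theorem; and the paper cites the equivalence $\|\boldsymbol{\Omega}\|_{\infty, mat}^{\boldsymbol{w}}<1\Leftrightarrow\rho(\boldsymbol{\Omega})<1$ rather than constructing the weight vector $\boldsymbol{w}$ by a Neumann series as you do. Everything therefore hinges on the one step you defer, and that step does not close as stated.

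Concretely, with $t_n\triangleq W_n/\sqrt{\Delta_n}\in(-1,1)$ your derivative is $\partial\mathcal{B}_n/\partial X_n=\left[(\gamma_n-1)+(1+\gamma_n)t_n\right]/(2G_{nn})$, which ranges over the interval $\left(-1/G_{nn},\ \gamma_n/G_{nn}\right)$. The bound $\left|\partial\mathcal{B}_n/\partial X_n\right|\le\gamma_n/G_{nn}$ is equivalent to $t_n\ge(1-3\gamma_n)/(1+\gamma_n)$; this holds automatically when $\gamma_n\ge1$, but when $\gamma_n<1$ it fails in exactly the regime you flag: letting $Y_n$ grow large drives $t_n\to-1$ and $\partial\mathcal{B}_n/\partial X_n\to-1/G_{nn}$, whose magnitude exceeds $\gamma_n/G_{nn}$. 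So the ``case analysis on the sign of $W_n$'' cannot rescue the claim; the honest entrywise bound is $\max(1,\gamma_n)G_{mn}/G_{nn}$, and the contraction argument then yields uniqueness only under the spectral-radius condition for that enlarged matrix, which is strictly stronger than $\rho(\boldsymbol{\Omega})<1$ whenever some $\gamma_n<1$. To finish you must either restrict to $\gamma_n\ge1$ (SINR thresholds of at least $0$\,dB, as in the paper's simulations) or exploit more structure than $|t_n|<1$. For what it is worth, the paper's own proof has the same soft spot: it establishes $|Z_n|<1$ and then asserts $\left|\gamma_n-1+Z_n(\gamma_n+1)\right|\le2\gamma_n$, which likewise requires $Z_n\ge(1-3\gamma_n)/(1+\gamma_n)$. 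Your instinct about where the difficulty lies was exactly right, but the difficulty is real, not merely anticipated.
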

}
\begin{proof}
See Appendix B.
\end{proof}

{
Till now, we can claim that the derived condition (i.e., $\rho(\boldsymbol{\Omega})< 1$) is actually a sufficient and necessary condition for the existence and uniqueness of the NE of the formulated game. Furthermore, if the condition is satisfied, the best response dynamics is guaranteed to achieve the unique NE in a distributed manner.}

\section{Numerical Results}

In this section, we provide some numerical results to validate the above theoretical analysis. We denote by $d_{mn}$ the inter-link distance between the $m$th source and the $n$th destination, $\forall m \in \mathcal{N} /\{n\}$, and denote by $d_{nn}$ the inner-link distance between the $n$th source and the $n$th destination. All channels are assumed to experience quasi-static flat Rayleigh fading. We adopt a channel model with $\mathbb{E}\left[G_{mn}\right] = 10^{-3}({d_{mn}})^{-\zeta}$ and $\mathbb{E}\left[G_{nn}\right] = 10^{-3}({d_{nn}})^{-\zeta}$, where $\zeta \in [2,5]$ is the path-loss factor and a $30$dB average signal power attenuation is assumed with reference distance of $1$m. In all simulations, we set $d_{nn} = 5$m and $d_{mn} = 10$m unless otherwise stated. Besides, we have $\eta = 0.5$, $\zeta = 3$, $\delta_n^2 = -60$dBm, and $\sigma_n^2 = -50$dBm, $\forall n\in\mathcal{N}$.

\begin{figure}
\centering \scalebox{0.7}{\includegraphics{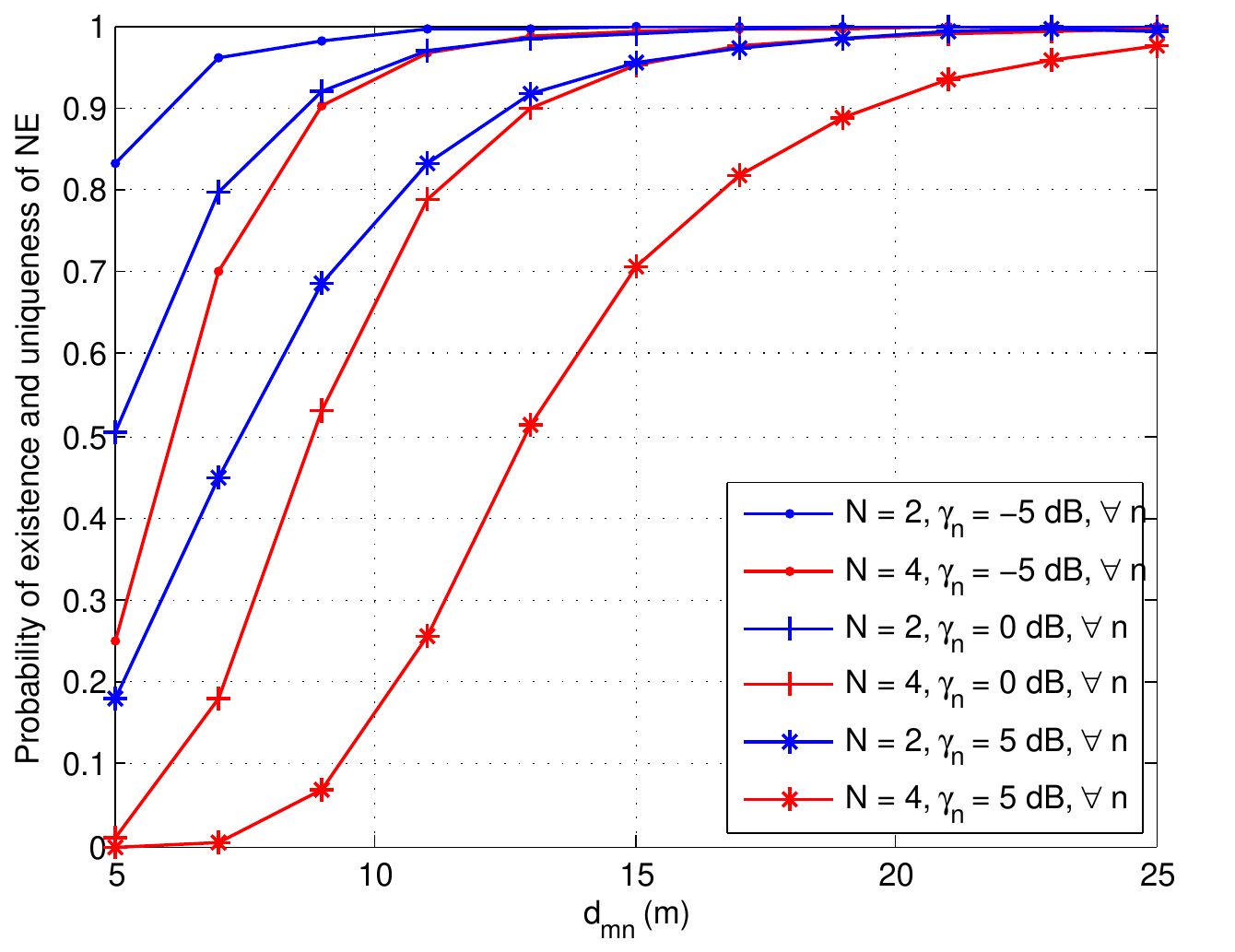}}
\caption{{Probability of the existence and uniqueness of the NE versus the inter-link distance $d_{mn}$ with different values of $N$ and $\gamma_n$, where $d_{nn} = 5$m.} }
\label{fig.exist}
\end{figure}
{Since the existence and uniqueness of NE of the formulated game depend on the channel gains $\left\{G_{mn}\right\}$ (See Proposition 2-3), there is a nonzero probability that the sufficient and necessary condition $\rho(\boldsymbol{\Omega}) < 1$ is not satisfied for a certain set of channel realization and SINR thresholds. To quantify how frequently the condition $\rho(\boldsymbol{\Omega}) < 1$ holds, we perform Monte Carlo simulations to evaluate the probability of existence and uniqueness of NE (i.e., the condition $\rho(\boldsymbol{\Omega}) < 1$ is satisfied) for several different scenarios. Specifically, in Fig.~\ref{fig.exist} we plot the curves of this probability versus the inter-link distance with different numbers of links and SINR thresholds. We can observe from this figure that, the probability of $\rho(\boldsymbol{\Omega}) < 1$ grows quickly as the inter-link distance increases and can approach to one when the inter-link distance is large enough for all simulated cases. Moreover, this probability also improves when either the number of links or SINR thresholds decreases. Both of these observations are caused by a decrease of inter-link interference. As discussed in the previous section, when the condition $\rho(\boldsymbol{\Omega}) < 1$ is not satisfied, admission control schemes should be adopted to opt out a few source-destination pairs to guarantee the existence (also uniqueness) of NE among the remaining active pairs. For simplicity, in the sequel we consider the scenario that appropriate admission control schemes have been carried out at the beginning of each transmission block such that the NE of the formulated game is guaranteed to exist among the active source-destination pairs, i.e., the condition $\rho(\boldsymbol{\Omega}) < 1$ is assumed to be~satisfied.
}

\begin{figure}
\centering
 \subfigure[Convergence of $p_n^\star$.]
  {\scalebox{0.55}{\includegraphics {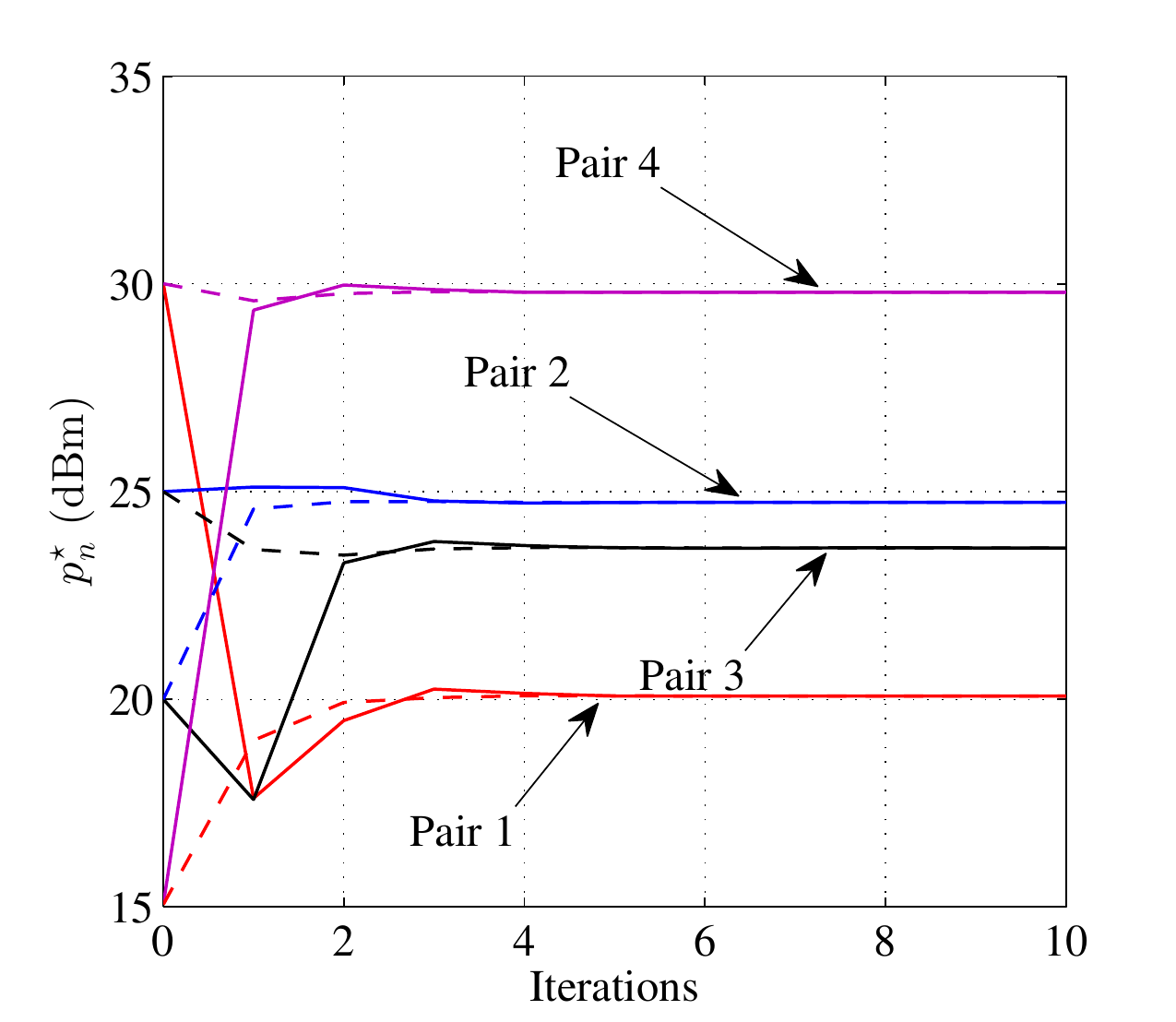}
  \label{Fig.p}}}
\hfil
 \subfigure[Convergence of $\alpha_n^\star$.]
  {\scalebox{0.55}{\includegraphics {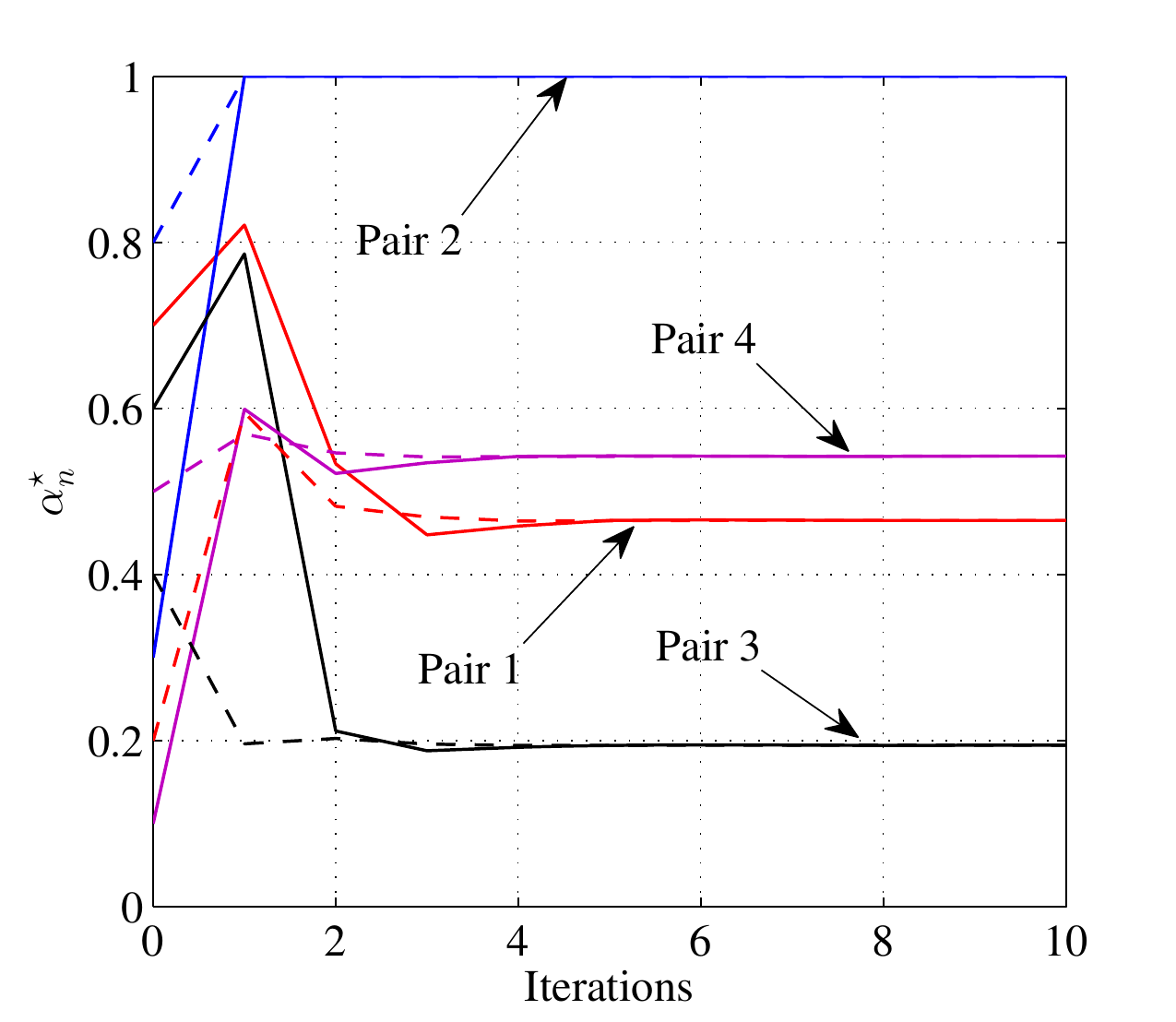}
\label{Fig.alpha}}}
\caption{The convergence of best-response dynamics in the formulated non-cooperative game with four source-destination pairs starting from two different sets of initial points, which are distinguished by solid lines and dash lines.}
\label{fig2}
\end{figure}
Next, we demonstrate the convergence of best-response dynamics in the formulated non-cooperative game for a four-pair setup with one randomly generated channel realization as well as SINR and EH constraints. The pairs 1-4 are assumed with SINR constraints $0$dB, $0$dB, $10$dB, $10$dB, and EH constraints $-20$dBm, $-10$dBm, $-20$dBm , $-10$dBm, respectively. Fig. \ref{Fig.p} and Fig. \ref{Fig.alpha} show the transmit power and the power splitting ratio of each pair versus the number of iterations, respectively. Two cases starting from two random sets of initial points are presented, which are distinguished by dash and solid lines, respectively. It can be observed from this figure that both $p_n^\star$ and $\alpha_n^\star$ can converge quickly to the same stationary values (i.e., the NE) from different starting points. Also, as the minimum $p_n^\star$ is about $20$dBm, the signal-to-noise ratio (SNR) region is over $70$dB ($\sigma_n^2 = -50$dBm), which is practical for energy harvesting devices. Note that due to the space limitation, we only show results in Fig. \ref{fig2} for one random realization of channel gains and the constraints, although similar results can also be shown for other realizations. This validates the effectiveness of the derived sufficient and necessary condition of the NE. { At last, it is worth pointing out that the optimal $\alpha_2$ is shown to be closed to one in Fig. \ref{Fig.alpha}. This phenomenon corresponds to the scenario that the inter-link interference suffered by pair 2 is very weak such that it only needs a sufficiently small transmit power to meet its SINR constraint. In such a case, the transmit power of this pair is mainly dominated by its EH constraint and thus the optimal value of $\alpha_2$ for this link should approach to one to align with the objective of minimizing the source transmit power.
}
\begin{figure}
\centering \scalebox{0.7}{\includegraphics{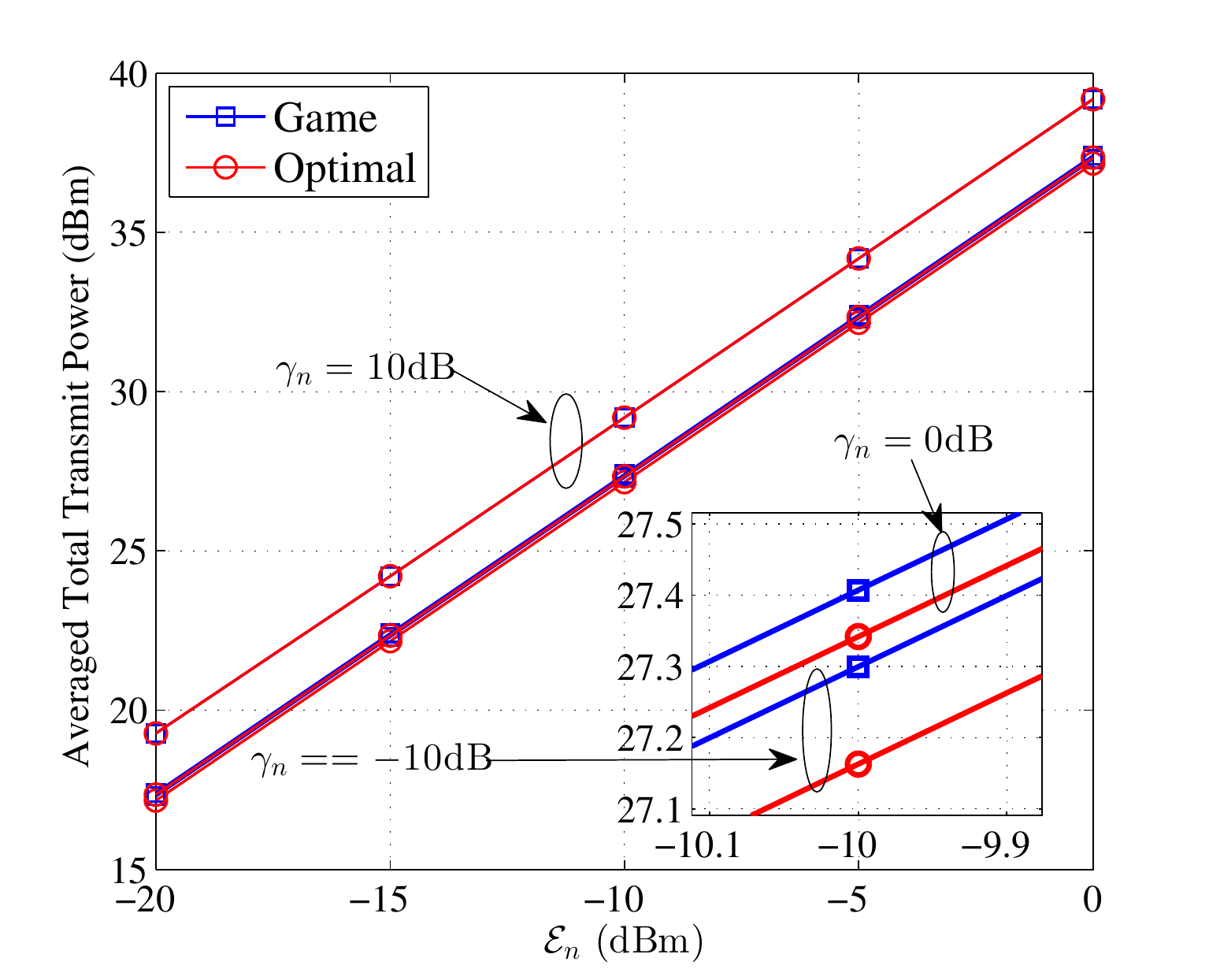}}
\caption{Averaged total transmit power versus the EH constraint with different SINR constraints in a two-pair network.}
\label{fig.sum}
\end{figure}

Fig. \ref{fig.sum} illustrates the averaged total transmit power versus the EH constraints with different SINR constraints in a two-pair network setup. The averaged total transmit power obtained by the proposed game-theoretic approach is compared with an optimal strategy calculated via an exhaustive search. The optimal strategy is conducted based on the assumption that all source-destination pairs cooperate with each other to minimize the averaged total transmit power. We note that the optimal scheme here is actually consistent with the problems considered in \cite{timotheou2013beamforming} and \cite{shi2014joint}. The SINR and EH constraints for two pairs are assumed to be the same, respectively. Each curve is obtained by averaging over $10^4$ independent channel realizations. It can be observed from Fig. \ref{fig.sum} that, the averaged total transmit power of the proposed game-theoretic method can closely match that of the optimal results. Note that the proposed strategy cannot perfectly coincide with the optimal strategy due to the rational and selfish behaviors of source-destination pairs in our game formulation. In addition, it is also shown that with the increasing of $\gamma_n$ or $\mathcal{E}_n$, the averaged total transmit power increases accordingly to meet the higher SINR and EH constraints.

\section{Conclusion}
In this paper, we developed a game-theoretic framework to tackle the distributed joint power and power splitting ratio problem in a IFC with SWIPT. We formulated a non-cooperative game for the considered system, where each source-destination pair is modeled as a selfish and rational player who aims to minimize its own transmit power under both SINR and EH constraints. We derived the best response strategy of each player and thus the NE can be obtained iteratively. A sufficient and necessary condition to guarantee the existence and uniqueness of the NE was also provided. The numerical results validated the derived condition and showed that the performance of the proposed game-theoretic approach can closely match the optimal strategy on averaged transmit power under various SINR and EH constraints.
\section*{Appendix A\\Proof of Proposition 1}
First of all, we could notice that the constraint (C3) and (C4) should hold with equality at the optimal solution; otherwise, we can always reduce the power of $p_n$. Therefore, we can solve the following equations to achieve the optimal solution of the formulated problem in (\ref{Prob.1})
 \begin{equation}\label{Eq.13}
\begin{cases}
\gamma_n - \frac{\left(1- \alpha_n\right) p_n G_{nn}}{\left(1-\alpha_n\right)\left(  \sum_{m \in \mathcal{N}/\{n\}} p_m G_{mn} + \delta_n^2 \right) + \sigma_n^2} = 0,\\
\mathcal{E}_n - \eta \alpha_n  \sum_{m\in\mathcal{N}}  p_m G_{mn}  = 0.
\end{cases}
\end{equation}
Let $X_n = \sum_{m \in \mathcal{N}/\{n\}} p_m G_{mn} + \delta_n^2 $ and $Y_n = \frac{\mathcal{E}_n}{\eta}$, we have
\begin{equation}
\begin{cases}
\frac{\left(1- \alpha_n\right) p_n G_{nn}}{\left(1-\alpha_n\right)X_n + \sigma^2} = \gamma_n ,\\
 \alpha_n \left(p_n G_{nn} + X_n \right)  = Y_n.
\end{cases}
\end{equation}
By solving $\alpha_n$ first, we can readily obtain
\begin{equation}
\begin{split}
\alpha_n = \frac{X_n+Y_n + \gamma_n X_n +\gamma_n \sigma_n^2 \pm \sqrt{\Delta_n}}{2(X_n + \gamma_n X_n)},
\end{split}
\end{equation}
where $\Delta_n \triangleq (X_n - Y_n + \gamma_n X_n + \gamma_n \sigma_n^2)^2 + 4\gamma_n Y_n \sigma_n^2 > 0$.

Recall that the optimization problem (\ref{Prob.1}) always has a feasible solution and $0\leq \alpha_n \leq 1$. It is easy to check that
\begin{equation}
\begin{split}
\alpha_n & = \frac{X_n+Y_n + \gamma_n X_n +\gamma_n \sigma_n^2 + \sqrt{\Delta_n}}{2(X_n + \gamma_n X_n)}\\
& > \frac{X_n+Y_n + \gamma_n X_n +\gamma_n \sigma_n^2 + \left| X_n - Y_n + \gamma_n X_n + \gamma_n \sigma_n^2\right|}{2(X_n + \gamma_n X_n)} \\
&  > 1,
\end{split}
\end{equation}
which is invalid. We thus obtain
\begin{equation}
\alpha_n^\star  = \frac{X_n + Y_n + \gamma_n X_n +\gamma_n \sigma_n^2 - \sqrt{\Delta_n}}{2(X_n + \gamma_n X_n)},
\end{equation}
and with some algebra manipulations, we have
\begin{equation}
p_n^\star  =\frac{ -  X_n + Y_n + \gamma_n X_n + \gamma_n \sigma_n^2  + \sqrt{\Delta_n}}{2G_{nn}},
\end{equation}
which completes the proof. 

\section*{Appendix B\\Proof of Proposition 2}

We first set $\mathcal{T}_n(\boldsymbol{p}) = \mathcal{B}_n(\boldsymbol{p}_{-n})$ and $\boldsymbol{\mathcal{T}}(\boldsymbol{p}) = \left(\mathcal{T}_n (\boldsymbol{p}) \right)_{n \in\mathcal{N}}$. Then, the proof of this proposition follows if the sufficient and necessary condition derived in Proposition 2 can guarantee that  the mapping $\boldsymbol{\mathcal{T}}(\boldsymbol{p})$ is a contraction mapping. To this end, we define that $\varphi_{\mathcal{T}_n(\boldsymbol{p})} = |\mathcal{T}_n(\boldsymbol{p}) -\mathcal{T}_n(\boldsymbol{p}') | $ and $\varphi_{n} = | p_n - p_n' |$, $\forall \boldsymbol{p},\boldsymbol{p}' \geq 0 $.
Recall that  $X_n = \sum_{m \in \mathcal{N}/\{n\}} p_m G_{mn} +\delta_n^2 $ and $\Delta _n = \left( X_n - Y_n  +  \gamma_n X_n  +  \gamma_n \sigma_n^2 \right)^2 + 4 \gamma_n Y_n \sigma_n^2$, we have $X_n' = \sum_{m \in \mathcal{N}/\{n\}} p_m' G_{mn} + \delta_n^2 $ and $\Delta _n' = \left( X_n' - Y_n  +  \gamma_n X_n'  +  \gamma_n \sigma_n^2 \right)^2 + 4 \gamma_n Y_n \sigma_n^2$. Then, we have
\begin{equation}\label{eq:111}
\begin{split}
 \varphi_{\mathcal{T}_n(\boldsymbol{p})} = &|\mathcal{T}_n(\boldsymbol{p}) -\mathcal{T}_n(\boldsymbol{p}') |\\
   =& \left | \frac{ { - X_n   + \gamma_n X_n   + \sqrt{ \Delta_n}}   + X_n' - \gamma_n X_n'  - \sqrt{ \Delta_n'} }{2 G_{nn}} \right|\\
= &\left | \frac{1}{2 G_{nn}} \left[  (\gamma_n -1)(X_n - X_n')  + \frac{\Delta_n - \Delta_n'} {\sqrt{ \Delta_n } + \sqrt{ \Delta_n'} } \right]\right|\\
\end{split}
\end{equation}
{
We can further simplify the term $\Delta_n - \Delta_n'$ in (\ref{eq:111}) as follows
\begin{equation}\label{eq:222}
\begin{split}
\Delta_n - \Delta_n'=& \left( X_n - Y_n  +  \gamma_n X_n  +  \gamma_n \sigma_n^2 \right)^2 - \left( X_n' - Y_n  +  \gamma_n X_n'  +  \gamma_n \sigma_n^2 \right)^2\\
=& \left[\left( X_n - Y_n  +  \gamma_n X_n  +  \gamma_n \sigma_n^2 \right) + \left( X_n' - Y_n  +  \gamma_n X_n'  +  \gamma_n \sigma_n^2 \right)\right] \times \\
&\left[\left( X_n - Y_n  +  \gamma_n X_n  +  \gamma_n \sigma_n^2 \right) - \left( X_n' - Y_n  +  \gamma_n X_n'  +  \gamma_n \sigma_n^2 \right)\right]\\
=& \left[\left( X_n - Y_n  +  \gamma_n X_n  +  \gamma_n \sigma_n^2 \right) + \left( X_n' - Y_n  +  \gamma_n X_n'  +  \gamma_n \sigma_n^2 \right)\right] \left[\left(\gamma_n + 1\right) \left( X_n - X_n'\right) \right].
\end{split}
\end{equation}
Substituting (\ref{eq:222}) into (\ref{eq:111}), we have
\begin{equation}\label{eq:333}
\begin{split}
 \varphi_{\mathcal{T}_n(\boldsymbol{p})} = \left | \frac{X_n - X_n' }{2 G_{nn}} \left[  \gamma_n -1  + Z_n (\gamma_n + 1) \right]\right|,
\end{split}
\end{equation}
where $Z_n$ is defined as
\begin{equation}\label{Eq.Z}
\begin{split}
Z_n \triangleq  \frac{\left(X_n - Y_n  +  \gamma_n X_n  +  \gamma_n \sigma_n^2\right) +  \left(X_n' - Y_n  +  \gamma_n X_n'  +  \gamma_n \sigma_n^2\right) } {\sqrt{ \Delta_n } + \sqrt{ \Delta_n'} }.
\end{split}
\end{equation}
By realizing that $\sqrt{ \Delta_n } > \left|X_n - Y_n  +  \gamma_n X_n  +  \gamma_n \sigma_n^2 \right| >0$ and $\sqrt{ \Delta_n' } > \left|X_n' - Y_n  +  \gamma_n X_n'  +  \gamma_n \sigma_n^2 \right|>0$, $\forall \gamma_n, \mathcal{E}_n, \sigma_n^2 > 0$, we have the following inequality regarding $Z_n$
\begin{equation}
\begin{split}
|Z_n| &< \frac{\left|\left(X_n - Y_n  +  \gamma_n X_n  +  \gamma_n \sigma_n^2\right) +  \left(X_n' - Y_n  +  \gamma_n X_n'  +  \gamma_n \sigma_n^2\right) \right|} {\left|X_n - Y_n  +  \gamma_n X_n  +  \gamma_n \sigma_n^2 \right| + \left|X_n' - Y_n  +  \gamma_n X_n'  +  \gamma_n \sigma_n^2 \right| }\\
&\le \frac{\left|X_n - Y_n  +  \gamma_n X_n  +  \gamma_n \sigma_n^2\right| +  \left|X_n' - Y_n  +  \gamma_n X_n'  +  \gamma_n \sigma_n^2\right| } {\left|X_n - Y_n  +  \gamma_n X_n  +  \gamma_n \sigma_n^2 \right| + \left|X_n' - Y_n  +  \gamma_n X_n'  +  \gamma_n \sigma_n^2 \right| }\\
& = 1,
\end{split}
\end{equation}
where the second inequality holds by following the fact that $\left|a+b\right|\le \left|a\right| + \left|b\right|$. With $|Z_n| < 1$ proved above, we can further simplify (\ref{eq:333}) as follows
\begin{equation}
\begin{split}
\varphi_{\mathcal{T}_n(\boldsymbol{p})} = & \left | \frac{X_n - X_n' }{2 G_{nn}} \left[ \gamma_n -1 + Z_n (\gamma_n + 1)  \right]\right |
<  \sum_{m \in\mathcal{N}/{n}} \left| \frac{G_{mn} \gamma_n}{G_{nn}} \right| \left |p_m - p_m' \right |
 = \sum_{m \in\mathcal{N}/{n}} \frac{G_{mn} \gamma_n}{G_{nn}} \varphi_{m} .
\end{split}
\end{equation}
}
We now define the vectors $\boldsymbol{\varphi}_{\mathcal{T}}  = \left[ \varphi_{\mathcal{T}_1(\boldsymbol{p})}, \cdots, \varphi_{\mathcal{T}_N(\boldsymbol{p})}\right]^T$, $\boldsymbol{\varphi} = \left[\varphi_{1}, \cdots, \varphi_{N} \right]^T$ and define the square matrix $\boldsymbol{\Omega} \in \mathbb{R}^{N \times N}$ as in (\ref{Eq.Ome}).

We thus have $\boldsymbol{\varphi}_{\mathcal{T}}  < \boldsymbol{\Omega} \boldsymbol{\varphi}$. With reference to \cite{nguyen2011multiuser}, we can obtain
\begin{equation}\label{mapping}
\begin{split}
\| \boldsymbol{\varphi}_{\mathcal{T}} \|_{2,block}^{\boldsymbol{w}} = \|\boldsymbol{\mathcal{T}}(\boldsymbol{p}) - \boldsymbol{\mathcal{T}}(\boldsymbol{p}')\|_{2,block}^{\boldsymbol{w}} < \|\boldsymbol{\Omega}\|_{\infty, mat}^{\boldsymbol{w}} \|\boldsymbol{p} -  \boldsymbol{p}'\|_{2, block}^{\boldsymbol{w}},
\end{split}
\end{equation}
where $\|\boldsymbol{x} \|_{2,block}^{\boldsymbol{w}}$ is the block-maximum norm of a vector $\boldsymbol{x}$ for a positive vector $\boldsymbol{w}$ \cite{bertsekas1989parallel} and $\|\cdot\|_{\infty, mat}^{\boldsymbol{w}}$ is the induced $\infty$-norm for matrix \cite{hornmatrix}. Thus, if $\|\boldsymbol{\Omega}\|_{\infty, mat}^{\boldsymbol{w}} < 1$, the NE is guaranteed to be unique because the mapping in (\ref{mapping}) is a contraction. As $\boldsymbol{\Omega}$ is a nonnegative matrix, we have $\|\boldsymbol{\Omega}\|_{\infty, mat}^{\boldsymbol{w}} < 1 \Leftrightarrow \rho(\boldsymbol{\Omega}) < 1$ \cite{bertsekas1989parallel},
which is guranteed when the NE of the formulated game exists (see Proposition 2). This completes the proof. %

\bibliographystyle{IEEEtran}
\bibliography{SWIPT-IFC}

\end{document}